\providecommand{\U}[1]{\protect\rule{.1in}{.1in}}
\newtheorem{theorem}{Theorem}
\newtheorem{algorithm}[theorem]{Algorithm}
\newtheorem{definition}[theorem]{Definition}
\newtheorem{example}[theorem]{Example}
\newtheorem{lemma}[theorem]{Lemma}
\newenvironment{proof}[1][Proof]{\noindent\textbf{#1.} }{\ \rule{0.5em}{0.5em}}
\newlength{\defbaselineskip}
\newcommand{\setlinespacing}[1]%
{\setlength{\baselineskip}{#1 \defbaselineskip}}
\newtheorem{thm}{Theorem}[section]
\newcommand{\Z}{\mathbb Z}
\newcommand{\mathleft}{\@fleqntrue\@mathmargin0pt}
\newcommand{\mathcenter}{\@fleqnfalse}
\title
{Extending the ElGamal Cryptosystem to the Third Group of Units of $\Z_{n}$}
\author[1]{Jana Hamza \thanks{jana.hamza115@gmail.com}}
\affil[1,5]{Department of Mathematics and Physics, Faculty of Arts and Sciences, Lebanese International University, Bekaa, Lebanon}
\author[2]{Mohammad EL Hindi \thanks{mohammadyhindi98@gmail.com}}
\affil[2]{Department of Mathematics and Computer Science, Faculty of Science, Beirut Arab University, Beirut, Lebanon}
\author[3]{Seifeddine Kadri \thanks{Seifedine.kadry@norof.no}}
\affil[3]{Faculty of Applied Computing and Technology, Norof University College, Kristiansand, Norway}
\author[4]{Therrar Kadri \thanks{TKadri@nwpolytech.ca}}
\affil[4]{Department of Science, Nourthwestern Polytechnic, Grande Prairie, Canada}
\author[5]{Yahya Awad \thanks{yehya.awad@liu.edu.lb}}
 \date{}
\newcommand{\keywords}[1]{\noindent \textbf{Keywords:} #1}
\begin{document}
	
	\maketitle

	\begin{abstract}
	In this paper, we extend the ElGamal cryptosystem to the third group of units of the ring $\Z_{n}$, which we prove to be more secure than the previous extensions. We describe the arithmetic needed in the new setting. We also provide some numerical simulations that shows the security and efficiency of our proposed cryptosystem.
	
	\end{abstract}

\keywords{ElGammal Cryptosystem,Group of units, Order of groups, Cyclic groups, Rings}

\section{Introduction}
ElGamal crptosystem was first introduced by T. ElGamal in \cite{6}. Classically the system was defined on the multiplicative group $\mathbb{Z}_{p}^{\ast }$, the group of integers modulo a prime $p$, which is a cyclic group generated by one of its elements, yet this cryptosystem can work in the setting of any cyclic
group $G$. The intractability of the discrete logarithm problem in the group $G$ is the basis for the security of the generalized ElGamal cryptosystem. Moreover, this group $G$ should be carefully chosen so that the operations of $G$ are relatively easy to apply for efficiency. This cryptosystem has been generalized several times over different groups. For more information about these generalizations, we guide the reader for the following (\cite{8},\cite{15},\cite{16},\cite{13},\cite{14},\cite{12}).

In 2006, El-Kassar and Chehade in \cite{1}  introduced a generalization of the group of units of the ring $\mathbb{Z}_{n}$ denoted by the k$^{th}$ group of units of $\mathbb{Z}_{n}$, $U^{k}(\mathbb{Z}_{n})$. For more information about the k$^{th}$ group of units, see (\cite{1},\cite{11},\cite{8}). The authors in \cite{1} determined all rings $R=\mathbb{Z}_{n}$ having the $2^{nd}$ group of units cyclic. These groups were used as an extension of the ElGamal Cryptosystem given by Haraty et al in \cite{8}.
They examined two cases of $U^{2}\mathbb({Z}_{n})$:

\begin{enumerate}
	\item Both $U(\mathbb{Z}_{n})$ and $U^{2}(\mathbb{Z}_{n})$ are cyclic.
	
	\item $U^{2}(\mathbb{Z}_{n})$ is cyclic, while $U(\mathbb{Z}_{n})$ is not cyclic.
\end{enumerate}

 Kadri and El-Kassar in [9] examined the third group of units of $\mathbb{Z}_{n}$ and determined all rings $\mathbb{Z}
_{n}$ having $U^{3}(\mathbb{Z}_{n})$ cyclic and proposed to extend the ElGamal cryptosystem to these groups in the case when they are cyclic.

In this paper, we extend the ElGamal cryptosystem to the third group of
units of $\mathbb{Z}_{n}$ in the case when they are cyclic. In other words, we modify the ElGamal public key encryption scheme from its classical domain natural to the domain of $U^{3}(\mathbb{Z}_{n})$ by extending the arithmetic needed for the modifications in this
domain.

In Section $2$ we describe the construction of $U^{3}(\mathbb{Z}_{n})$ and some theorems related to this group. Section $3$ provides the description of our proposed cryptosystem. Finally, Section $4$ shows a comparison between our work and some previous results.

\section{Preliminaries}

In this section we give a brief presentation of the Classical ElGamal
public key cryptosystem, and the modified ElGamal cryptosystem in the
setting of the second group of units of the ring of integers modulo $n$.

The basic algorithms for the functioning of this cryptosystem are described in the following three algorithms:

\begin{algorithm}
	Key generation
\end{algorithm}

\begin{enumerate}
	\item Find a generator $\alpha $ of the
	 $%
	\mathbb{Z}
	_{p}^{\ast }$.
	
	\item Select a random integer $a$, $1\leq a\leq p-2$, and compute $a^{\alpha
	}\bmod{p}$.
	
	\item $A$'s public key is $(p,\alpha ,\alpha ^{a})$; $A$'s private key is $a$%
	.
\end{enumerate}

The following algorithm shows how $B$ can encrypt a message $m$ to $A$.

\begin{algorithm}
	Encryption
\end{algorithm}

\begin{enumerate}
	\item Obtain $A$'s authentic public key $(p,\alpha ,\alpha ^{a})$.
	
	\item Represent the message as an integer $m$ in the range $(0,1,...,p-1)$.
	
	\item Select a random integer $k$, $1\leq k\leq p-2$.
	
	\item Compute $\gamma =\alpha ^{k}\bmod{p}$ and $\delta =m\cdot (\alpha
	^{a})^{k}\bmod{p}$.
	
	\item Send the ciphertext $c=(\gamma ,\delta )$ to $A$.
\end{enumerate}

Here is the algorithm that $A$ uses to recover the message $m$.

\begin{algorithm}
	Decryption
\end{algorithm}

\begin{enumerate}
	\item Use the private key $a$ to compute $\gamma ^{p-1-a}\bmod{p}=\gamma
	^{-a}$
	
	\item Recover $m$ by computing $(\gamma ^{-a})\cdot \delta \bmod{p}$.
\end{enumerate}

Now for the modified ElGamal cryptosystem over $U^{2}(\mathbb{Z}_{n})$, as we mentioned before, two cases were considered. \\
\underline{Case 1: $U(\mathbb{Z}_{n})$ and $U^{2}(\mathbb{Z}_{n})$ are cyclic}: The elements of $U^{2}(\mathbb{Z}_{n})$ in this case have the form $U^{2}(\mathbb{Z}_{n})=\{r^{i}\bmod n:\gcd (i,\varphi (n))=1\}$, where $r$ is the
generator of $U(\mathbb{Z}_{n})$.

The extended ElGamal public key cryptosystem over $U^{2}(\mathbb{Z}_{n})$ follows the next four algorithms:

\begin{algorithm}
	Generator of $U^{2}(\mathbb{Z}	_{n})$
\end{algorithm}

\begin{enumerate}
	\item Find a generator $\theta _{1}$ of $U(\mathbb{Z}_{n})$.
	
	\item Write the order of $U^{2}(\mathbb{Z}
	_{n})$ as $p_{1}^{\alpha _{1}}p_{2}^{\alpha _{2}}...p_{k}^{\alpha _{k}}$.
	
	\item Select a random integer $s$, $0\leq s\leq \varphi (n)-1$, $(s,\varphi
	(n))=1$.
	
	\item For $j=1$ to $i$, do:
	
	\begin{enumerate}
		\item[4.1.] Compute $\theta _{1}^{N/p_{j}}\bmod n$.
		
		\item[4.2.] If $\theta _{1}^{s(N/p_{j})}\bmod n\equiv \theta _{1}$, then
		go to step $3$.
	\end{enumerate}
	
	\item Return $s$.
\end{enumerate}

For the key generation, use this algorithm:

\begin{algorithm}
	Key generation
\end{algorithm}

\begin{enumerate}
	\item Find a generator $\theta _{1}$ of $U(\mathbb{Z}_{n})$.
	
	\item Find $s$ using the previous algorithm.
	
	\item Compute the order of $U^{2}(\mathbb{Z}_{n})$ using $\varphi ^{2}(n)$.
	
	\item Select a random integer $a$, $2\leq a\leq \varphi ^{2}(n)-1$, and
	compute $f=s^{a}(\bmod \varphi (n))$.
	
	\item $A$'s public key is $(n,\theta _{1},s,f)$ and $A$'s private key is $a$.
\end{enumerate}

For the encryption of a message $m$, the following algorithm is used:

\begin{algorithm}
	Encryption
\end{algorithm}

\begin{enumerate}
	\item $B$ obtains $A$'s authentic public key $(n,\theta _{1},s,f)$.
	
	\item Represent the message as an integer in $U^{2}(\mathbb{Z}_{n})$.
	
	\item Select a random integer $k$, $2\leq k\leq \varphi ^{2}(n)-1$.
	
	\item Compute $q=s^{k}(\bmod\varphi (n)),r=f^{k}(\bmod\varphi
	(n)), $ $\gamma =\theta ^{k}=\theta _{1}^{q}(\bmod n)$ and $\delta
	=m^{r}(\bmod n)$.
	
	\item Send the cipher text $c=(q,\delta )$ to $A$.
\end{enumerate}

Finally, to decrypt the message, use the next algorithm:

\begin{algorithm}
	Decryption
\end{algorithm}

\begin{enumerate}
	\item Use the private key $a$ to compute $b=\varphi ^{2}(n)-a$.
	
	\item Recover the message by computing $t=q^{b}(\bmod\varphi (n))$ and $%
	\delta ^{t}(\bmod n)$.
\end{enumerate}

\underline{Case $2$: $U^{2}(\mathbb{Z}_{n})$ is cyclic, }\underline{while $U(\mathbb{Z}_{n})$ is not cyclic:} The extended ElGamal public key cryptosystem over $U^{2}(\mathbb{Z}_{n})$ follows the next four algorithms:

\begin{algorithm}
	Generator of $U^{2}(\mathbb{Z}_{n})$
\end{algorithm}

\begin{enumerate}
	\item Find a generator $\theta _{1}$ of $U(\mathbb{Z}_{n})$.
	
	\item Write the order of $U^{2}(\mathbb{Z}
	_{n})$ as $p_{1}^{\alpha _{1}}p_{2}^{\alpha _{2}}...p_{k}^{\alpha _{k}}$.
	
	\item Select a random integer $s$, $0\leq s\leq \varphi (p)-1$, $(s,\varphi
	(p))=1$.
	
	\item For $j=1$ to $i$, do:
	
	\begin{enumerate}
		\item[4.1.] Compute $\theta _{1}^{N/p_{j}}\bmod p$.
		
		\item[4.2.] If $\theta _{1}^{s(N/p_{j})}\bmod p\equiv \theta _{1}$, then
		go to step $3$.
	\end{enumerate}
	
	\item Use the Chinese Remainder Theorem to find $\theta $, and $s$ by
	solving the system of congruencies: $x\equiv 2(\bmod 3)$ and $x\equiv
	\theta _{1}^{s}(\bmod p)$.
	
	\item return $s$.
\end{enumerate}

Now, for the key generation, $A$ uses the following algorithm:

\begin{algorithm}
	Key generation
\end{algorithm}

\begin{enumerate}
	\item Find a generator $\theta _{1}$ of $U(\mathbb{Z}_{p})$.
	
	\item Find $s$ using the previous algorithm.
	
	\item Compute the order of $U^{2}(\mathbb{Z}_{p})$ using $\varphi ^{2}(p)$.
	
	\item Select a random integer $a$, $2\leq a\leq \varphi ^{2}(p)-1$, and
	compute $f=s^{a}(\bmod\varphi (p))$.
	
	\item $A$'s public key is $(p,\theta _{1},s,f)$ and $A$'s private key is $a$.
\end{enumerate}

For $B$ to encrypt a message $m$ for $A$, he uses this algorithm:

\begin{algorithm}
	Encryption
\end{algorithm}

\begin{enumerate}
	\item $B$ obtains $A$'s authentic public key $(n,\theta _{1},s,f)$.
	
	\item Represent the message as an integer in $U^{2}(\mathbb{Z}_{p})$.
	
	\item Select a random integer $k$, $2\leq k\leq \varphi ^{2}(p)-1$.
	
	\item Compute $q=s^{k}(\bmod\varphi (p)),r=f^{k}(\bmod\varphi
	(p)), $ $\gamma =\theta ^{k}=\theta _{1}^{q}(\bmod p)$ and $\delta
	=m^{r}(\bmod p)$.
	
	\item Send the cipher text $c=(q,\delta )$ to $A$.
\end{enumerate}

Finally to decrypt the message $m$, $A$ applies the next algorithm:

\begin{algorithm}
	Decryption
\end{algorithm}

\begin{enumerate}
	\item Use the private key $a$ to compute $b=\varphi ^{2}(p)-a$.
	
	\item Recover the message by computing $t=q^{b}(\bmod \varphi (p))$ and $%
	\delta ^{t}(\bmod p)$.\\
\end{enumerate}

\section{Construction Of $U^{3}(\mathbb{Z}_{n})$}

In this paper, in order to apply the ElGamal cryptosystem on $U^{3}(\mathbb{Z}_{n})$, it must be a cyclic group, so we are concerned 
about the values of $n $ that makes $U^{3}(\mathbb{Z}_{n})$ cyclic.

\begin{lemma}
	$U(\mathbb{Z}_{3^{\alpha }})$, $U(\mathbb{Z}_{\varphi (3^{\alpha })})$, and $U(\mathbb{Z}_{\varphi (\varphi (3^{\alpha }))})$ are cyclic for all $\alpha >0.$
\end{lemma}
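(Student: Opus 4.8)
The plan is to reduce all three claims to the classical characterization of cyclic unit groups: $U(\mathbb{Z}_n)$ is cyclic if and only if $n \in \{1, 2, 4, p^k, 2p^k\}$ for some odd prime $p$ and integer $k \geq 1$ (the primitive root theorem). Once this criterion is in hand, each of the three groups follows by computing the relevant modulus explicitly and checking that it falls into one of the admissible forms --- in every case with $p = 3$. So the real work is just evaluating the iterated totients and matching them against the allowed patterns.

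First I would dispatch $U(\mathbb{Z}_{3^{\alpha}})$ immediately: the modulus $3^{\alpha}$ is a power of the odd prime $3$, so it has the form $p^k$ and the criterion applies directly. Next I would handle the second modulus. Since $\varphi(3^{\alpha}) = 3^{\alpha} - 3^{\alpha-1} = 2 \cdot 3^{\alpha-1}$, the group in question is $U(\mathbb{Z}_{2 \cdot 3^{\alpha-1}})$. For $\alpha \geq 2$ this modulus has the form $2p^k$ with $p = 3$ and $k = \alpha - 1 \geq 1$, so it is cyclic; for $\alpha = 1$ the modulus is $2$, and $U(\mathbb{Z}_2)$ is the trivial (hence cyclic) group.

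Finally, for the third modulus I would use multiplicativity of $\varphi$ on coprime factors: for $\alpha \geq 2$ we have $\gcd(2, 3^{\alpha-1}) = 1$, so $\varphi(\varphi(3^{\alpha})) = \varphi(2 \cdot 3^{\alpha-1}) = \varphi(2)\,\varphi(3^{\alpha-1}) = 2 \cdot 3^{\alpha-2}$. For $\alpha \geq 3$ this is again of the form $2p^k$, hence cyclic; the small cases $\alpha \in \{1, 2\}$ give moduli $1$ and $2$ respectively, whose unit groups are trivial. Collecting the three computations completes the argument.

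The content here is genuinely routine, so I do not expect a substantive obstacle. The only point demanding care is the bookkeeping at small values of $\alpha$, where the successive totients collapse to $1$ or $2$ and one must invoke the triviality of those unit groups rather than the generic $2p^k$ form; keeping these edge cases straight, together with stating the classical criterion correctly, is all that the proof requires.
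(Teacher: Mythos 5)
Your proof is correct: the primitive root theorem together with the computations $\varphi(3^{\alpha})=2\cdot 3^{\alpha-1}$ and $\varphi(\varphi(3^{\alpha}))=2\cdot 3^{\alpha-2}$ (for $\alpha\geq 2$), plus the trivial small cases, is exactly what is needed. The paper in fact states this lemma without any proof, so there is no authorial argument to compare against; your write-up supplies the standard justification the paper implicitly relies on, including the edge cases $\alpha\in\{1,2\}$ that the paper never addresses.
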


\begin{lemma}
	$U(\mathbb{Z}_{2.3^{\alpha }})$, $U(\mathbb{Z}_{\varphi (2.3^{\alpha })})$, and $U(\mathbb{Z}_{\varphi (\varphi (2.3^{\alpha }))})$ are cyclic for all $\alpha >0.$
\end{lemma}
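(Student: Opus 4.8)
The plan is to reduce everything to the classical characterization of when a group of units is cyclic, namely Gauss's primitive root theorem: $U(\mathbb{Z}_m)$ is cyclic precisely when $m \in \{1, 2, 4, p^k, 2p^k\}$ for an odd prime $p$ and $k \geq 1$. Each of the three assertions then becomes a matter of checking that the relevant modulus lands in this list.

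First I would record the single totient computation that drives the whole argument. Since $\varphi$ is multiplicative and $\varphi(3^{\beta}) = 2\cdot 3^{\beta-1}$ for $\beta \geq 1$, we get $\varphi(2\cdot 3^{\beta}) = \varphi(2)\varphi(3^{\beta}) = 2 \cdot 3^{\beta-1}$ for $\beta \geq 1$, while $\varphi(2 \cdot 3^{0}) = \varphi(2) = 1$. Thus applying $\varphi$ to a modulus of the shape $2 \cdot 3^{\beta}$ simply decrements the exponent of $3$ (until the degenerate tail).

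Next I would apply this to the three moduli in turn. The first, $2\cdot 3^{\alpha}$, is of the form $2p^{k}$ with $p=3$, hence cyclic. The second is $\varphi(2\cdot 3^{\alpha}) = 2\cdot 3^{\alpha-1}$: for $\alpha \geq 2$ this is again of the form $2p^{k}$, and for $\alpha = 1$ it collapses to $2$, whose unit group is trivial and therefore cyclic. The third is $\varphi(\varphi(2\cdot 3^{\alpha})) = 2 \cdot 3^{\alpha-2}$: for $\alpha \geq 3$ this is $2p^{k}$, for $\alpha = 2$ it equals $2$, and for $\alpha = 1$ it equals $\varphi(2) = 1$, whose unit group is again trivial. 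In every case the modulus lies in $\{1,\,2,\,2\cdot 3^{k}\}$, so the classification theorem yields cyclicity.

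The argument is essentially a direct verification, so I do not anticipate a genuine obstacle; the only place demanding care is the small-$\alpha$ boundary, where the iterated totient collapses to $2$ or $1$ and one must invoke the trivial-group cases of the classification rather than the $2p^{k}$ case. I would also remark that, after the identification $\varphi(3^{\alpha}) = 2\cdot 3^{\alpha-1}$, the second and third moduli here coincide with the second and third moduli of the preceding lemma, so those two parts may alternatively be quoted directly from it, leaving only the new first modulus $2\cdot 3^{\alpha}$ to be checked.
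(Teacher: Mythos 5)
Your proof is correct and complete. Note that the paper states this lemma with no proof at all, so there is no argument of the authors' to compare against; your route via the classification of cyclic unit groups ($U(\mathbb{Z}_m)$ cyclic exactly for $m \in \{1,2,4,p^{k},2p^{k}\}$, $p$ an odd prime) together with the computation $\varphi(2\cdot 3^{\beta}) = 2\cdot 3^{\beta-1}$ is the standard argument the authors evidently had in mind. Your attention to the boundary cases $\alpha = 1,2$, where the iterated totient collapses to $2$ or $1$ and one must fall back on the trivial-group cases of the classification, and your closing observation that $\varphi(2\cdot 3^{\alpha}) = \varphi(3^{\alpha})$ makes the second and third assertions identical to those of the preceding lemma, are both accurate.
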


Now, we define the operation that gives the group isomorphic to $U^{3}(\mathbb{Z}_{n})$ as follows:

\begin{thm}\label{isomorphism}
	\label{U3 Isomorphism}Let $U(\mathbb{Z}_{n})$, $U(\mathbb{Z}_{\varphi (n)})$, and $U(\mathbb{Z}_{\varphi (\varphi (n))})$ be cyclic groups. Then  $f:U^{3}(\mathbb{Z}_{n})\longrightarrow \mathbb{Z}_{\varphi (\varphi (\varphi(n)))}$ given by 
	\begin{equation*}
		f(a)=\log_{g_{3}}\log_{g_{2}}
			\log_{g_{1}}a \bmod n
	\end{equation*}
	is an isomorphism, where $g_{1}$, $g_{2}$, and $g_{3}$ are the generators of $U(\mathbb{Z}_{n})$, $U(\mathbb{Z}_{\varphi (n)})$, and $U(\mathbb{Z}_{\varphi (\varphi (n))})$ respectively.
\end{thm}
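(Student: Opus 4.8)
The plan is to exhibit $f$ as a composition of three discrete-logarithm maps and to argue that each factor is an isomorphism between consecutive groups in the tower
\begin{equation*}
U^{3}(\mathbb{Z}_{n})\;\xrightarrow{\ \log_{g_{1}}\ }\;U^{2}(\mathbb{Z}_{\varphi(n)})\;\xrightarrow{\ \log_{g_{2}}\ }\;U(\mathbb{Z}_{\varphi^{2}(n)})\;\xrightarrow{\ \log_{g_{3}}\ }\;\mathbb{Z}_{\varphi^{3}(n)},
\end{equation*}
where $\varphi^{3}(n)=\varphi(\varphi(\varphi(n)))$. The three cyclicity hypotheses are exactly what guarantee that the generators $g_{1},g_{2},g_{3}$ exist and that each logarithm is defined, so the first task is to check that the image of each map lands precisely in the domain of the next.

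First I would record the fact underlying every factor: for a cyclic group $(C,\cdot)=\langle g\rangle$ of order $N$ the discrete logarithm $\log_{g}\colon C\to\mathbb{Z}_{N}$ is a bijective homomorphism, and the higher unit groups are constructed precisely so that each such logarithm restricts to an isomorphism onto the next layer. I would then unwind the recursion one step at a time. Using the description recalled in Case~1, namely $U^{2}(\mathbb{Z}_{m})=\{h^{i}:i\in U(\mathbb{Z}_{\varphi(m)})\}$ for $h$ a generator of $U(\mathbb{Z}_{m})$, the choice $m=n$ shows that $\log_{g_{1}}$ maps $U^{2}(\mathbb{Z}_{n})$ isomorphically onto $U(\mathbb{Z}_{\varphi(n)})$ and, one level deeper, maps $U^{3}(\mathbb{Z}_{n})$ onto $U^{2}(\mathbb{Z}_{\varphi(n)})$. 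Applying the same description with $m=\varphi(n)$ and $h=g_{2}$, the map $\log_{g_{2}}$ carries $U^{2}(\mathbb{Z}_{\varphi(n)})$ isomorphically onto $U(\mathbb{Z}_{\varphi^{2}(n)})$.

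For the last factor I would invoke the plain discrete-logarithm isomorphism: since $U(\mathbb{Z}_{\varphi^{2}(n)})$ is cyclic of order $\varphi^{3}(n)$ with generator $g_{3}$, the map $\log_{g_{3}}$ is an isomorphism onto $(\mathbb{Z}_{\varphi^{3}(n)},+)$. Composing the three factors reproduces exactly $f=\log_{g_{3}}\circ\log_{g_{2}}\circ\log_{g_{1}}$, which is therefore a bijective homomorphism; as an independent verification of the cardinalities one has $|U^{3}(\mathbb{Z}_{n})|=\varphi(\varphi^{2}(n))=\varphi^{3}(n)=|\mathbb{Z}_{\varphi^{3}(n)}|$.

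The step I expect to be the true obstacle is not any individual logarithm but the bookkeeping that glues them together. One must confirm that the operation carried on $U^{3}(\mathbb{Z}_{n})$ is the inherited nested operation rather than the ambient multiplication modulo $n$, and that at each stage the restriction genuinely lands in the next \emph{unit} subgroup rather than in the full cyclic group, so that the following logarithm is applicable. Once the recursion is unwound with this care and the operation is checked to be preserved across all three levels, the statement follows by composing isomorphisms.
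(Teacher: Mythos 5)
Your proof is correct and takes essentially the same route as the paper's: both factor $f$ as the composition of three discrete-logarithm isomorphisms, restricted level by level to the successive higher groups of units. The differences are only bookkeeping (you push $U^{3}(\mathbb{Z}_{n})$ through one logarithm at a time, via $U^{2}(\mathbb{Z}_{\varphi(n)})$ and $U(\mathbb{Z}_{\varphi^{2}(n)})$, whereas the paper restricts $h_{1}$ to $U^{2}(\mathbb{Z}_{n})$ and then $h_{2}\circ h_{1}$ to $U^{3}(\mathbb{Z}_{n})$), and your justification of where each restriction lands, via the explicit description $U^{2}(\mathbb{Z}_{m})=\{h^{i}: i\in U(\mathbb{Z}_{\varphi(m)})\}$, together with your attention to the fact that the higher unit groups carry the inherited nested operation rather than multiplication modulo $n$, is if anything more careful than the paper's appeal to mere subset containment.
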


\begin{proof}
	Let $U(\mathbb{Z}_{n})$, $U(\mathbb{Z}_{\varphi (n)})$, and $U(\mathbb{Z}_{\varphi (\varphi (n))})$ be cyclic groups. Let $g_{1}$ be a generator of $U(\mathbb{Z}_{n})$. Since $U(\mathbb{Z}_{n})$ is cyclic and finite of order $\varphi (n)$, then $U(\mathbb{Z}_{n})\approx \mathbb{Z}_{\varphi (n)}$ by a function $h_{1}:U(\mathbb{Z}_{n})\longrightarrow \mathbb{Z}_{\varphi (n)}$ defined by $h_{1}(a)=log _{g_{1}}a\bmod 
	n$. Now since $U^{2}(\mathbb{Z}_{n})$ is a subset of $U(\mathbb{Z}_{n})$, and $U(\mathbb{Z}_{\varphi(n)})$ is a subset 
	 of $\mathbb{Z}_{\varphi(n)}$, then 
 $h_{1}:U^{2}(\mathbb{Z}_{n})\longrightarrow U(\mathbb{Z}_{\varphi (n)})$ is an isomorphism.
	
	\noindent Let $g_{2}$ be a generator of $U(\mathbb{Z}_{\varphi (n)})$. Since $U(\mathbb{Z}_{\varphi (n)})$ is cyclic and finite of order $\varphi (\varphi (n))$, then $U(\mathbb{Z}_{\varphi (n)})\approx \mathbb{Z}_{\varphi (\varphi (n))}$ by a function $h_{2}:U(\mathbb{Z}_{\varphi (n)})\longrightarrow\mathbb{Z}_{\varphi (\varphi (n))}$ defined by $h_{2}(b)=\log _{g_{2}}b\bmod \varphi (n)$ or $h_{2}\circ h_{1}:U^{2}(\mathbb{Z}_{n})\longrightarrow\mathbb{Z}_{\varphi (\varphi (n))}$ defined by $h_{2}\circ h_{1}(a)=\log _{g_{2}}\log
	_{g_{1}}a\bmod n$.  Now since $U^{3}(\mathbb{Z}_{n})$ is a subgoup of $U^{2}(\mathbb{Z}_{n})$ and $U(\mathbb{Z}_{\varphi{(\varphi(n))}})$ is a subgroup of $\mathbb{Z}_{\varphi{(\varphi(n))}}$, then 
	$h_{2}\circ h_{1}:U^{3}(\mathbb{Z}_{n})\longrightarrow U(\mathbb{Z}_{\varphi (\varphi (n))})$ is an isomorphism.
	
	Now let $g_{3}$ be a generator of $U(\mathbb{Z}_{\varphi (\varphi (n))})$. Since $U(\mathbb{Z}_{\varphi (\varphi (n))})$ is cyclic and finite of order $\varphi(\varphi(\varphi (n)))$, then the function $h_{3}:U(\mathbb{Z}_{\varphi (\varphi(n))})\longrightarrow\mathbb{Z}_{\varphi (\varphi (\varphi (n)))}$ defined by $h_{3}(c)=\log _{g_{3}}c\bmod\varphi (\varphi (n))$ is an isomorphism.
	
	This implies that $h_{3}\circ h_{2}\circ h_{1}(a):U^{3}(\mathbb{Z}_{n})\longrightarrow \mathbb{Z}_{\varphi (\varphi (\varphi (n)))}$ defined by $h_{3}\circ h_{2}\circ
	h_{1}(a)=\log _{g_{3}}\log _{g_{2}}\log _{g_{1}}a\bmod n$ is an isomorphism.
\end{proof}

Now for the construction of $U^{3}(\mathbb{Z}_{n})$, we use the following algorithm:

\begin{enumerate}
	\item Find a generator $g_{1}$ for the group $U(\mathbb{Z}_{n})$.
	
	\item Write each element in $U(\mathbb{Z}_{n})$ as a power of $g_{1}$.
	
	 $U(\mathbb{Z}_{n})=\{g_{1}^{i}\bmod n,0\leq i\leq \varphi (n)\}$.
	
	\item Find a generator $g_{2}$ for the group $U(\mathbb{Z}_{\varphi (n)})$.
	
	\item Write each element in $U(\mathbb{Z}_{\varphi (n)})$ as a power of $g_{2}$. 
	
	$U(\mathbb{Z}_{\varphi (n)})=\{g_{2}^{i}\bmod \varphi (n),0\leq i\leq \varphi
	(\varphi (n))\}$.
	
	\item Find $U^{3}(\mathbb{Z}_{n})=\{g_{1}^{g_{2}^{i}\bmod \varphi (n)}\bmod n,\gcd (i,\varphi
	(\varphi (n)))=1\}$.
\end{enumerate}

The following example shows how to find $U^{3}(\mathbb{Z}_{11})$, and the isomorphic element corresponding to each of its elements in $\mathbb{Z}_{2}$ 
\begin{example}
	A generator $g_{1}$ of $U(\mathbb{Z}_{11})$ is $2$.
	
	We have $U(\mathbb{Z}_{11})=
	\{1=2^{0},2=2^{1},3=2^{8},4=2^{2},5=2^{4},6=2^{9};7=2^{7},8=2^{3},9=2^{6},10=2^{5}\} 
	$ all $\bmod 11$, and $U(\mathbb{Z}_{\varphi (11)})=U(\mathbb{Z}_{10})$. A generator $g_{2}$ of $U(\mathbb{Z}_{10})$ is $3$.
	
	$U(\mathbb{Z}_{10})=\{1=3^{0},3=3^{1},7=3^{3},9=3^{2}\}$ all $\bmod 10$.
	
	\begin{tabular}{ll}
		$U^{3}(\mathbb{Z}_{11})$ & $=\{2^{3^{i}}\bmod 11,\gcd (i,4)=1\}$ \\ 
		& $=\{2^{3^{1}}\bmod 11,2^{3^{3}}\bmod 11\}$ \\ 
		& $=\{2^{3}\bmod 11,2^{7}\bmod 11\}$ \\ 
		& $=\{8,7\}$
	\end{tabular}
	
	$U(\mathbb{Z}_{\varphi (\varphi (11))})=U(\mathbb{Z}_{\varphi (10)})=U(\mathbb{Z}_{4})=\{1,3\}$.
	
	Now to find the isomorphism group of $U^{3}(\mathbb{Z}_{11})$, we find $g_{3}$, the generator of $U(\mathbb{Z}_{4})$, $g_{3}=3$.
	
	\begin{tabular}{ll}
		$f(7)$ & $=\log _{3}\log _{3}\log _{2}7\bmod 11$\\
		& $=\log _{3}\log _{3}\log_{2}2^{7}\bmod 11$ \\ 
		& $=\log _{3}\log _{3}7\bmod 11$\\
		& $=\log _{3}\log _{3}3^{3}\bmod 11 $\\ 
		& $=\log _{3}3\bmod 11$\\
		& $=1$
	\end{tabular}
	
	\begin{tabular}{ll}
		$f(8)$ & $=\log _{3}\log _{3}\log _{2}8\bmod 11$\\
		&$=\log _{3}\log _{3}\log
		_{2}2^{3}\bmod 11$ \\ 
		& $=\log _{3}\log _{3}3\bmod 11$\\
	    & $=\log _{3}1\bmod 11$ \\ 
		& $=0$%
	\end{tabular}
	
	Therefore, $7$ in $U^{3}(\mathbb{Z}_{11})$ is isomorphic to $1$ and $8$ in $U^{3}(\mathbb{Z}_{11})$ is isomorphic to $0$, and thus $U^{3}(\mathbb{Z}_{11})=\{7,8\}\approx \{0,1\}=\mathbb{Z}_{2}$.
\end{example}

 The following Theorem explains how to find a generator of $U^{3}(\mathbb{Z}_{n})$.

\begin{thm}
	Let $g$ be a generator of $U^{3}(\mathbb{Z}_{n})$. Then $g$ has the form $g_{1}^{g_{2}^{g_{3}}(\bmod \varphi (n))}(\bmod n)$, where $g_{1}$, $g_{2}$, and $g_{3}$ are the generators of $U(\mathbb{Z}_{n})$, $U(\mathbb{Z}_{\varphi (n)})$, and $U(\mathbb{Z}_{\varphi (\varphi (n))})$ respectively.
\end{thm}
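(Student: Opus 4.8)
The plan is to push the hypothesis through the explicit isomorphism built in Theorem \ref{isomorphism} and read off what the generator condition forces on the innermost exponent. Throughout I would fix generators $g_{1}$, $g_{2}$, $g_{3}$ of $U(\mathbb{Z}_{n})$, $U(\mathbb{Z}_{\varphi(n)})$, $U(\mathbb{Z}_{\varphi(\varphi(n))})$ and recall from the proof of Theorem \ref{isomorphism} that the composite $h_{2}\circ h_{1}\colon U^{3}(\mathbb{Z}_{n})\longrightarrow U(\mathbb{Z}_{\varphi(\varphi(n))})$, given by $(h_{2}\circ h_{1})(a)=\log_{g_{2}}\log_{g_{1}}a\bmod n$, is itself an isomorphism. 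The point of working with $h_{2}\circ h_{1}$ rather than the full map $f$ is that it lands in the group $U(\mathbb{Z}_{\varphi(\varphi(n))})$, one of whose generators will end up playing the role of $g_{3}$.

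First I would set $i:=(h_{2}\circ h_{1})(g)=\log_{g_{2}}\log_{g_{1}}g$. Since $h_{2}\circ h_{1}$ is an isomorphism onto the cyclic group $U(\mathbb{Z}_{\varphi(\varphi(n))})$, it carries generators to generators bijectively; because $g$ is assumed to generate $U^{3}(\mathbb{Z}_{n})$, it follows that $i$ is a generator of $U(\mathbb{Z}_{\varphi(\varphi(n))})$. (Equivalently, one can verify this through $f$: unwinding $f(g)=\log_{g_{3}}i$ and using that $g$ generates $U^{3}(\mathbb{Z}_{n})$ forces $\gcd\big(\log_{g_{3}}i,\varphi(\varphi(\varphi(n)))\big)=1$, which, as $g_{3}$ generates $U(\mathbb{Z}_{\varphi(\varphi(n))})$, says exactly that $i$ has full order, i.e. is a generator.) Next I would unwind the definition of $h_{2}\circ h_{1}$ in the other direction: $i=\log_{g_{2}}\log_{g_{1}}g$ gives $\log_{g_{1}}g=g_{2}^{\,i}\bmod\varphi(n)$, and hence $g=g_{1}^{\,g_{2}^{\,i}\bmod\varphi(n)}\bmod n$, which is consistent with the set description of $U^{3}(\mathbb{Z}_{n})$ recorded in the construction.

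The final step is the key observation that closes the forward direction: having shown that $i$ is itself a generator of $U(\mathbb{Z}_{\varphi(\varphi(n))})$, we are entitled to rename it and take $g_{3}:=i$ as our third generator. Substituting back yields $g=g_{1}^{\,g_{2}^{\,g_{3}}\bmod\varphi(n)}\bmod n$ with $g_{1}$, $g_{2}$, $g_{3}$ all genuine generators of the three respective groups, which is precisely the asserted form; this establishes that \emph{every} generator $g$ admits such a representation, not merely that some distinguished $g_{1}^{\,g_{2}^{\,g_{3}}}$ is a generator. I expect the main obstacle to be conceptual bookkeeping rather than computation: one must keep the generator $g_{3}$ used to \emph{define} the isomorphism distinct from the generator produced to \emph{represent} $g$, and must justify each ``$\gcd=1\Leftrightarrow$ generator'' passage in the correct modulus. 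It is worth emphasizing that the generator hypothesis is used exactly where needed, since a non-generating element of $U^{3}(\mathbb{Z}_{n})$ maps under $h_{2}\circ h_{1}$ to a mere unit $i$ of $U(\mathbb{Z}_{\varphi(\varphi(n))})$ that is not a generator, so the renaming $g_{3}:=i$ would fail and the representation with a true generator on top would not exist.
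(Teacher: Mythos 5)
Your proof is correct, and it takes a genuinely different --- and in fact more careful --- route than the paper's. The paper pushes $g$ all the way down through the full isomorphism $f\colon U^{3}(\mathbb{Z}_{n})\longrightarrow \mathbb{Z}_{\varphi(\varphi(\varphi(n)))}$ and then argues: since $(\mathbb{Z}_{\varphi(\varphi(\varphi(n)))},+)$ is cyclic of generator $1$, we must have $f(g)=1$, whence $g=f^{-1}(1)=g_{1}^{g_{2}^{g_{3}}\bmod \varphi(n)}\bmod n$. That middle step is a non sequitur once $\varphi(\varphi(\varphi(n)))>2$: the additive group $\mathbb{Z}_{m}$ has $\varphi(m)$ generators, and $f(g)$ is merely one of them, not necessarily $1$; so with $g_{1},g_{2},g_{3}$ fixed in advance, the paper's computation really only exhibits the single element $f^{-1}(1)$ as a generator, rather than proving that an \emph{arbitrary} generator $g$ has the stated form. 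Your argument stops one floor higher, at the isomorphism $h_{2}\circ h_{1}\colon U^{3}(\mathbb{Z}_{n})\longrightarrow U(\mathbb{Z}_{\varphi(\varphi(n))})$, observes that $i:=(h_{2}\circ h_{1})(g)$ is forced to be a generator of the target, inverts the two discrete logarithms to get $g=g_{1}^{g_{2}^{i}\bmod \varphi(n)}\bmod n$, and then legitimately renames $g_{3}:=i$, exploiting the fact that the statement does not pin down \emph{which} generator $g_{3}$ is. This buys the correct universal quantifier (every generator admits the representation) at the mild, unavoidable cost of letting $g_{3}$ depend on $g$; the paper's version buys an explicit formula for one distinguished generator but, as written, does not prove the quantifier in its own statement. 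Your closing point --- keeping the $g_{3}$ that defines the isomorphism distinct from the $g_{3}$ produced to represent $g$ --- is exactly the bookkeeping the paper's proof elides.
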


\begin{proof}
	We have\ from Theorem \ref{U3 Isomorphism} that $U^{3}(\mathbb{Z}_{n})\approx\mathbb{Z}_{\varphi (\varphi (\varphi (n)))}$ by a function $f:U^{3}(\mathbb{Z}_{n})\longrightarrow\mathbb{Z}_{\varphi (\varphi (\varphi (n)))}$ defined by $f(a)=\log _{g_{3}}\log
	_{g_{2}}\log _{g_{1}}a~\bmod n,$ where $f$ is a group isomorphism under addition in $\mathbb{Z}_{\varphi (\varphi (\varphi (n)))}.$ Moreover, if $U^{3}(\mathbb{Z}_{n})$ is cyclic of generator $g,$ then $\mathbb{Z}_{\varphi (\varphi (\varphi (n)))}$ is cyclic of generator $f(g).$
	
	\noindent  However, $(\mathbb{Z}_{\varphi (\varphi (\varphi (n)))},+)$ is cyclic of generator $1,$ then $f(g)=1.$ Therefore, 
	\begin{align*}
	g&=f^{-1}(1)\\
	&=(\log _{g_{3}}\log
	_{g_{2}}\log _{g_{1}})^{-1}(1)\\
	&=\log _{g_{1}}^{-1}\log _{g_{2}}^{-1}(g_{3}^{1})\\
	&=\log
	_{g_{1}}^{-1}(g_{2}^{g_{3}})\\
	&=g_{1}^{g_{2}^{g_{3}}(\bmod \varphi (n))}(%
	\bmod n).
	\end{align*}
	
\end{proof}

\begin{definition}
	Let $f$ be the function defined in Theorem \ref{U3 Isomorphism}. The operations in $(U^{3}(\mathbb{Z}_{n}),\oplus,\otimes)$ are defined as follows:
	\begin{enumerate}
		\item  $x\oplus y=x^{\log_{r}y}\bmod n$, where $r$ is the generator of $U({\mathbb{Z}_{n}})$.
		\item   $x\otimes y= 
		f^{-1}(f(x)+f(y))$.
		\item  $x^{n}=f^{-1}(nf(x))$.
	\end{enumerate}
\end{definition}
\section{ElGamal Cryptosystem over $U^{3}(\mathbb{Z}_{n})$}

The following three algorithms illustrate the ElGamal Cryptosystem over $U^{3}(\mathbb{Z}_{n})$.

For key generation, entity $A$ must do the following:

\begin{algorithm}
	(key generation)
\end{algorithm}

\begin{enumerate}
	\item find a generator $g$ of $U^{3}(\mathbb{Z}_{n}).$
	
	\item select a random integer $b,$ $1\leq b\leq \varphi ^{3}(n).$
	
	\item compute $B=g^{b}$.
	
	\item $A$'s public key is $(g,B)$ and $A$'s private key is $b.$
\end{enumerate}

To encrypt a message $m$ for $A$, entity $B$ must use the following
algorithm:

\begin{algorithm}
	(Encryption)
\end{algorithm}

\begin{enumerate}
	\item obtain $A$'s public key $(g,B).$
	
	\item represent the message as an integer $m$ in $U^{3}(\mathbb{Z}_{n}).$
	
	\item select a random integer $a,$ $1\leq a\leq \varphi ^{3}(n).$
	
	\item compute $s=B^{a}.$
	
	\item compute $A=g^{a}.$ 
	
	\item compute $X=m\otimes s.$ 
	
	\item send the cipher text $c=(A,X).$
\end{enumerate}

To recover the message $m$, entity $A$ uses this algorithm:

\begin{algorithm}
	(decryption)
\end{algorithm}

\begin{enumerate}
	\item use the private key to compute $s=A^{b}.$
	
	\item compute $s^{-1}.$
	
	\item recover the message $m$ by computing $m=X\otimes s^{-1}.$
\end{enumerate}

\begin{thm}
	Given a generator $g$ of $U^{3}(\mathbb{Z}_{n})$. Define $B=g^{b}$, $A=g^{a}$, $s=B^{a}=A^{b}$, and $X=m \otimes s$. If $k\in U^{3}(\mathbb{Z}_{n})$ such that $k=X \otimes s^{-1}$, then $k=m$. 	 
\end{thm}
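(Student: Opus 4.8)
The plan is to push the entire statement through the isomorphism $f$ of Theorem \ref{U3 Isomorphism}, which carries $(U^{3}(\mathbb{Z}_{n}),\otimes)$ onto the additive group $(\mathbb{Z}_{\varphi(\varphi(\varphi(n)))},+)$, so that the claimed cancellation becomes trivial subtraction in $\mathbb{Z}_{\varphi(\varphi(\varphi(n)))}$. Writing $N=\varphi(\varphi(\varphi(n)))$ for brevity, I would first record the effect of $f$ on the exponentiation operation: since the Definition sets $x^{t}=f^{-1}(t\,f(x))$, applying $f$ yields $f(x^{t})=t\,f(x)\bmod N$. In particular $f(A)=a\,f(g)$ and $f(B)=b\,f(g)$, whence $f(s)=f(B^{a})=a\,b\,f(g)=f(A^{b})\bmod N$; this simultaneously confirms that the defining equality $B^{a}=A^{b}$ is consistent, so $s$ is well defined.

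Next I would translate the two remaining quantities. From the product rule $x\otimes y=f^{-1}(f(x)+f(y))$ we obtain $f(X)=f(m\otimes s)=f(m)+f(s)\bmod N$. For the inverse, I would use that $f$ is an isomorphism of groups, not merely a bijection: applying $f$ to $s\otimes s^{-1}=e$ (where $e$ is the $\otimes$-identity, satisfying $f(e)=0$) gives $f(s)+f(s^{-1})=0$, so $f(s^{-1})=-f(s)\bmod N$.

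Finally I would compute $f(k)$ directly and invoke injectivity. Substituting the previous identities,
\begin{equation*}
f(k)=f(X\otimes s^{-1})=f(X)+f(s^{-1})=\bigl(f(m)+f(s)\bigr)-f(s)=f(m)\bmod N.
\end{equation*}
Since $f$ is a bijection, $f(k)=f(m)$ forces $k=m$, which is exactly the correctness of the decryption step that the theorem asserts.

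The only point that requires genuine care, rather than mechanical substitution, is the justification of the two transfer rules $f(x^{t})=t\,f(x)$ and $f(s^{-1})=-f(s)$: both rely on $f$ respecting the full group structure (iterated operation and inversion), which is granted precisely because Theorem \ref{U3 Isomorphism} establishes $f$ as a group isomorphism under addition. Once this is in hand, the argument lives entirely inside $\mathbb{Z}_{N}$, and no explicit modular arithmetic over $n$, $\varphi(n)$, or $\varphi(\varphi(n))$ is needed.
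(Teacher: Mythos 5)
Your proposal is correct, but it takes the ``extrinsic'' route where the paper takes the ``intrinsic'' one. The paper's own proof is a single line carried out entirely inside $(U^{3}(\mathbb{Z}_{n}),\otimes)$: from $s=B^{a}=A^{b}=g^{ab}$ and $X=m\otimes s$ it concludes $k=X\otimes s^{-1}=m\otimes g^{ab}\otimes (g^{ab})^{-1}=m$ by direct cancellation, never mentioning $f$. You instead transport every quantity through the isomorphism $f$ into $(\mathbb{Z}_{\varphi(\varphi(\varphi(n)))},+)$, cancel $f(s)$ against $-f(s)$ there, and pull back by injectivity. These are the same cancellation seen from opposite sides of $f$, but your version makes explicit what the paper's one-liner silently assumes: that $\otimes$ actually satisfies the group laws (associativity, identity, inverses) needed for $m\otimes s\otimes s^{-1}=m$. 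Since $\otimes$ and $x^{t}$ are \emph{defined} by transporting $+$ along the bijection $f$, those laws hold exactly for the reason you give, and your transfer rules $f(x^{t})=t\,f(x)$ and $f(s^{-1})=-f(s)$ are the precise justification. The trade-off: the paper's proof is shorter and reads as ordinary group theory; yours is longer but self-contained, and it also verifies in passing that $B^{a}=A^{b}$ (well-definedness of $s$), which the paper asserts without comment.
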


\begin{proof}
	We have $s=B^{a}=A^{b}=g^{ab}$, and $X=m \otimes s=m\otimes g^{ab}$, then $k=X\otimes s^{-1}=m\otimes g^{ab} \otimes (g^{ab})^{-1}=m$. 
	
\end{proof}

The following example is an application of our cryptosystem.

\begin{example}
	
	Let $n=3^{4}=81$.\\
	By applying Theorem \ref{isomorphism}, we get that $U^{3}(\mathbb{Z}_{81})\approx \mathbb{Z}_{6}$, where 
	\begin{align*}
	5&\approx 4, 23\approx 3, 
	32\approx 0, 
	50\approx 1, 
	59\approx 2, 
	77\approx 5
	\end{align*}
	\underline{Key Generation}
	\begin{enumerate}
		\item $g=50$ is a generator of $U^{3}(\mathbb{Z}_{81})$ 
		
\item select $b=4$
		
	\item	compute 
	        	\begin{align*}
	        	B & =g^{b}
	        	=50^{4}  
	        	 =f^{-1}(4f(50))
	        	=f^{-1}(4)  
	        	 =5%
	            \end{align*}
	 \item Public key is $(50,5)$ and private key is $4$.
	\end{enumerate}
	\underline{Encryption}
	\begin{enumerate}
		\item choose $a=2$
		\item compute  \begin{align*}
			s&=B^{a}
			=5^{2}
			=f^{-1}(2f(5))
			=f^{-1}(2.4\bmod{6})
			=f^{-1}(2)
			=59
		\end{align*}
	    \item compute  \begin{align*}
		A&=g^{a}
		=50^{2}
		=f^{-1}(2f(50))
		=f^{-1}(2)
		=59
	\end{align*} 
\item choose $m=77$.
\item compute \begin{align*}
	X&=m\otimes s
	=77\otimes 59
	=f^{-1}(f(77)+f(59))
	=f^{-1}(5+2 \bmod{6})
	=f^{-1}(1)
	=50.
\end{align*}
\item cipher text is $(59, 50)$.
\end{enumerate}
\underline{Decryption}
\begin{enumerate}
	\item compute 
	\begin{align*}
	s&=A^{b}
	=59^{4}
	=f^{-1}(4f(59))
	=f^{-1}(4.2\bmod{6})
	=f^{-1}(2)
	=59
	\end{align*}
	\item compute 
	\begin{align*}
	s^{-1}&=f^{-1}([f(59)]^{-1})
	=f^{-1}(2^{-1})
	=f^{-1}(4)
	=5
	\end{align*}
	\item compute
	\begin{align*}
	m&=X\otimes s^{-1}
	=50\otimes 5
	= f^{-1}(f(50)+f(5))
	=f^{-1}(1+4 \bmod{6})
	=f^{-1}(5)
	=77.
	\end{align*}
\end{enumerate}
	\end{example}
\section{Efficiency and security of the cyptosystem}
In this section we present a comparative study between the efficiency and security of our cryptosystem and that present in \cite{8}.  
Since the Baby step- Giant step attack algorithm depends basically on the operation $g^{i}$, where $g$ is the generator of the selected group, it was enough for us to compare the compiling time of $g^{i}$ for both groups. We generated our algorithms on  Wolfram Mathematica 12. We used two groups $U^{2}(\mathbb{Z}_{n})$ and $U^{3}(\mathbb{Z}_{m})$ of approximately equal orders (difference between orders is $2$), and after running both programs 50 times, on randomly chosen elements from both groups, the results are presented in Figure 1. The results prove that the timing for each iteration in $U^{3}(\mathbb{Z}_{m})$ was around 30 times that of $U^{2}(\mathbb{Z}_{n})$, which indicates that the iterations are way more complex in our new cryptosystem, and made it way harder to crack the system.
\begin{figure}[hbt!]
	\centering
	\caption{Time comparison between iterations done on algorithms of $U^{2}(\mathbb{Z}_{n})$ and $U^{3}(\mathbb{Z}_{m})$ }
	\includegraphics[width=0.5\textwidth]{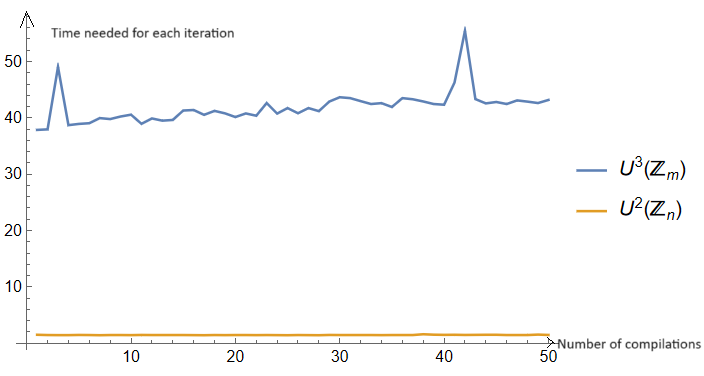}
	\label{xip311}	
\end{figure} 
Note that the blue curve corresponds to timing of $U^{3}(\mathbb{Z}_{m})$, and the orange curve corresponds to that of $U^{2}(\mathbb{Z}_{n})$.

\end{document}